\documentclass[letterpaper,10pt,conference]{ieeeconf}

\IEEEoverridecommandlockouts
\usepackage{amsmath,amssymb,graphicx,booktabs,cite,microtype}
\usepackage[hidelinks]{hyperref}
\newtheorem{theorem}{Theorem}
\newtheorem{lemma}{Lemma}
\newtheorem{corollary}{Corollary}
\newtheorem{proposition}{Proposition}
\newtheorem{definition}{Definition}

\title{\LARGE \bf
Connectivity Preservation and Graph Stretching\\
in Range-Only Swarm Dispersion
}
\author{Ariel Barel$^{1}$%
\thanks{$^{1}$Faculty of Computer Science, Technion Israel Institute of Technology, Haifa, Israel. Email: arielba@technion.ac.il} }
\begin{document}
\maketitle
\thispagestyle{empty}
\pagestyle{empty}

\begin{abstract}
We study connectivity-preserving finite-jump dispersion of anonymous, identical, and oblivious agents under an idealized range-only sensing model. Each agent measures only the distances to its visible neighbors, without bearings, identifiers, communication, memory, or a shared coordinate system. We derive the largest isotropic displacement certifiable as safe from these measurements alone. The resulting rule requires only the distance to the farthest visible neighbor: each agent selects a random direction and moves by half of its remaining visibility margin. The rule preserves every existing visibility edge under synchronous finite motion and therefore preserves connectivity. For two agents, we prove positive conditional drift in squared distance, almost-sure convergence to the visibility boundary, and finite expected time to reach any fixed neighborhood of that boundary. A one-million-run Monte Carlo experiment agrees with the exact first-round moments and estimates approximately $9.5$ rounds to reach distance $0.97V$ from coincident initial positions; an independent Bellman-equation computation gives the same estimate. For general swarms, $1{,}000$ runs across five initial-topology classes reproduce the deterministic safety guarantee at implementation level and reveal a consistent topology-dependent ordering of attainable diameter under the tested protocol. These results provide a theoretical foundation for connectivity-preserving multi-robot dispersion under minimal sensing, while isolating the guarantees achievable from anonymous range measurements alone.
\end{abstract}

\section{Introduction}
Without directional information, a mobile agent cannot identify a direction away from a neighbor, reconstruct local geometry, or determine whether a visible edge is redundant for global connectivity. Under finite synchronous motion, losing one critical edge can disconnect a swarm. We study anonymous, identical, oblivious agents that know only the current distances to visible neighbors. Agents have no identifiers, memory, communication, compass, common coordinates, or bearings.

We ask for the largest finite displacement certifiable from scalar ranges alone while preserving every current edge for every feasible bearing arrangement, and whether repeated maximal safe jumps create systematic stretching. Here, \emph{dispersion} means increasing the spatial extent of the connected swarm and stretching preserved edges. It does not mean maximizing minimum pairwise distance; individual pairs may temporarily approach, and collisions are not modeled.

If the farthest visible neighbor of agent $i$ is at distance $d_i^{\max}$ and the visibility radius is $V$, our rule jumps $(V-d_i^{\max})/2$ in an independent uniform direction. This is the largest bearing-independent isotropic displacement certifiable from current ranges. It preserves every current edge and hence connectivity. An edge at distance $V$ permanently immobilizes both endpoints, and every absorbing configuration contains at least $n/2$ boundary edges.

Preserving every edge is stronger than preserving connectivity alone, but anonymous current ranges do not reveal which edges are globally redundant. Thus preserving a strict subset cannot be locally certified under the stated model. For two agents, squared distance has positive conditional drift below $V$, yielding almost-sure convergence and finite expected near-boundary hitting time. For larger swarms we provide a multi-run empirical topology study rather than a convergence claim.

Our main contributions are as follows.
\begin{itemize}
    \item We derive the maximal isotropic displacement that can be certified using anonymous current range measurements. Although an agent can observe several ranges, the safe-action certificate depends only on the farthest one.
    \item We prove deterministic preservation of every current visibility edge for arbitrary swarm size and every realization of the random directions. Connectivity preservation follows immediately for an initially connected swarm.
    \item We prove exact locking at the visibility boundary and show that every absorbing configuration contains at least $n/2$ boundary edges.
    \item For two agents, we prove positive conditional drift in squared distance, almost-sure convergence to the visibility boundary, and finite expected time to reach every fixed boundary neighborhood.
    \item We complement the analytical results with large-scale Monte Carlo validation, an independent Bellman computation of the near-boundary hitting time, and extensive empirical studies showing topology-dependent stretching across multiple swarm sizes and initial graph classes.
\end{itemize}

The paper separates three levels of claim. The safety and structural results hold for arbitrary $n$. The stochastic convergence results are proved only for the analytically tractable two-agent system. For general coupled swarms, we make no convergence claim; the contribution is an empirical characterization of topology-dependent expansion and edge stretching.

\section{Related Work}
Connectivity-preserving coordination commonly uses relative positions, interaction graphs, potential fields, or learned navigation policies. Ji and Egerstedt preserve connectedness of dynamic interaction graphs \cite{ji2007distributed}. Zavlanos and Pappas use differentiable constraints and artificial potential fields \cite{zavlanos2007potential}; Dimarogonas and Johansson propose bounded distributed connectivity control \cite{dimarogonas2008decentralized}; and Li et al. learn a decentralized navigation policy \cite{li2022decentralized}. These approaches require relative positions, graph information, navigation functions, or a trained policy with richer observations.

Ando et al. restrict pair destinations to a midpoint-centered disk \cite{ando1999distributed}, but locating the midpoint requires relative direction. Gordon et al. study bearing-only and crude near/far sensing \cite{gordon2004gathering,gordon2008crude}; continuous-time bearing-only gathering appears in \cite{bellaiche2017continuous}. Flocchini et al. consider anonymous oblivious robots with limited visibility under asynchrony but assume orientation \cite{flocchini2005gathering}. The broader geometric-consensus taxonomy surveyed in \cite{barel2019cometogether} considers the corresponding bearing-only setting but does not address range-only gathering, for which no solution under this severe sensing restriction was identified at the time.

Recent work moves closer to scalar-distance sensing. Lee et al. use an ultrasonic tracker that supplies both range and bearing \cite{lee2023practical}. Brandst\"atter et al. use imperfect scalar distances, but each agent updates a directional-response model and uses state estimation and a target distance \cite{brandstatter2024flock}. Chen et al. provide range-only safety-critical formation control, but reconstruct bearing relative to a landmark and use an estimator, control barrier functions, and a prescribed formation \cite{chen2025rangeonly}. Yang et al. combine online communication learning with Gaussian processes, barrier functions, and bi-level optimization \cite{yang2024online}. Manor et al. release redundant interactions using local geometry \cite{manor2019local}, which is unavailable here.

The contribution is not the triangle inequality alone, but the maximality of the resulting action certificate under anonymous range-only information, its reduction to one scalar statistic, and the stochastic behavior induced by repeatedly exhausting that certificate.

\section{Model and Motion Rule}
Let $\mathcal A=\{1,\ldots,n\}$, $n\ge2$, and $p_i(t)\in\mathbb R^2$. The visibility graph $G(t)=(\mathcal A,E(t))$ satisfies
\begin{equation}(i,j)\in E(t)\iff\|p_i(t)-p_j(t)\|\le V.\end{equation}
Define
\begin{equation}N_i(t)=\{j\in\mathcal A\setminus\{i\}:(i,j)\in E(t)\}.\end{equation}
We assume $G(0)$ is connected. Each agent observes only the multiset of current ranges and need not associate measurements with persistent identities; it only computes the maximum.

For a pair at distance $d$, the total margin is $V-d$. In the worst case both agents jump directly apart, so each receives $(V-d)/2$. Any larger common isotropic radius permits loss of the edge.

\begin{proposition}[Maximal isotropic range-only radius]\label{prop:maximal-radius}
For a visible pair at distance $d$, the largest common isotropic displacement radius that guarantees preservation of the edge for every feasible pair of bearings and synchronous moves is
\begin{equation}
\frac{V-d}{2}.
\end{equation}
\end{proposition}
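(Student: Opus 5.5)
The claim has two halves, and I would prove them separately: sufficiency of the radius $r^\star=(V-d)/2$, and optimality, meaning that no larger common radius works. I will model a synchronous move of a visible pair $(i,j)$ with $\|p_i-p_j\|=d\le V$ as new positions $p_i'=p_i+r u_i$ and $p_j'=p_j+r u_j$. Here $u_i,u_j$ range over the unit circle, since the agents know no bearings, and $r$ is the common isotropic radius. Since an agent might also move by any length up to $r$, I will allow $\|p_i'-p_i\|\le r$ and $\|p_j'-p_j\|\le r$ throughout. The guarantee is required to hold for every such choice.

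For sufficiency, I apply the triangle inequality to the decomposition $p_i'-p_j'=(p_i-p_j)+(p_i'-p_i)-(p_j'-p_j)$. This gives
\begin{equation}
\|p_i'-p_j'\|\le d+2r\le d+(V-d)=V
\end{equation}
whenever $r\le (V-d)/2$. So the edge $(i,j)$ survives for every choice of directions. This step is routine.

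For maximality, I fix any $r>(V-d)/2$ and exhibit a feasible pair of bearings that breaks the edge.
\begin{itemize}
\item If $d>0$, let $e=(p_i-p_j)/d$ and take $u_i=e$, $u_j=-e$, so the agents jump directly apart. Then $\|p_i'-p_j'\|=d+2r>V$ and the edge is lost.
\item If $d=0$, the agents coincide. Take any unit $e$ with $u_i=e$ and $u_j=-e$; then the distance is $2r>V$.
\end{itemize}
In both cases these directions are feasible, because the agents cannot distinguish them from any other pair: their observation is only the scalar $d$. Hence a rule that commits to a common radius from range data alone must cover this configuration. The bound is therefore tight, and the supremum $(V-d)/2$ is attained, since the inequality above is non-strict.

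The main obstacle is not the computation but making precise what ``common isotropic radius'' and ``every feasible pair of bearings'' mean. I would state explicitly that the adversary, or the random draw of independent uniform directions, may produce the antipodal pair. This is exactly the worst case the triangle inequality saturates, which is what makes the argument a genuine maximality statement rather than merely a sufficient condition.
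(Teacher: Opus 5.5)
Your proposal is correct and follows essentially the same route as the paper: the triangle inequality gives sufficiency, and moving directly apart gives distance $d+2r>V$ for any $r>(V-d)/2$, which proves maximality. Your separate treatment of the coincident case $d=0$, where "directly apart" means any antipodal pair, is a small addition that the paper leaves implicit.
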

\begin{proof}
Sufficiency follows from the triangle inequality. For maximality, let both agents use any radius $r>(V-d)/2$ and consider the feasible motion in which they move directly away from one another. Their new distance is $d+2r>V$, so the edge is lost.
\end{proof}

With several neighbors, the farthest is most restrictive:
\begin{equation}
\begin{aligned}
d_i^{\max}(t)
&=\max_{j\in N_i(t)}\|p_i(t)-p_j(t)\|,\\
r_i(t)
&=\frac{V-d_i^{\max}(t)}{2}.
\end{aligned}
\label{eq:allowable-radius}
\end{equation}
\begin{corollary}[Agent allowable region]\label{cor:agent-allowable-region}
The complete isotropic range-only allowable region of agent $i$ is
\begin{equation}
AR_i(t)=D\!\left(p_i(t),\frac{V-d_i^{\max}(t)}{2}\right),
\end{equation}
where $D(p,r)$ denotes the closed disk of radius $r$ centered at $p$. Hence the full multiset of current neighbor ranges reduces to the single statistic $d_i^{\max}(t)$.
\end{corollary}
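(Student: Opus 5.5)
The corollary has two halves, and I would prove them separately.

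\emph{Sufficiency.} Every displacement in $AR_i(t)$ keeps all current edges. Fix an edge $(i,j)\in E(t)$ at distance $d_{ij}\le V$. We have $d_{ij}\le d_i^{\max}$ and $d_{ij}\le d_j^{\max}$. Suppose agent $i$ moves by $\delta_i$ with $\|\delta_i\|\le r_i=(V-d_i^{\max})/2$, and $j$ moves by $\delta_j$ with $\|\delta_j\|\le r_j$. The triangle inequality then gives
\begin{equation*}
\|p_i(t+1)-p_j(t+1)\|\le d_{ij}+\frac{V-d_i^{\max}}{2}+\frac{V-d_j^{\max}}{2}\le d_{ij}+\frac{V-d_{ij}}{2}+\frac{V-d_{ij}}{2}=V .
\end{equation*}
This reuses the sufficiency part of Proposition~\ref{prop:maximal-radius}. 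The one new point is that each endpoint's radius is at most the pairwise radius $(V-d_{ij})/2$, because $d_{ij}\le d_i^{\max}, d_j^{\max}$. Since the bound holds for arbitrary directions, the disk is safe for every bearing arrangement.

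\emph{Maximality (``complete'').} I would show that no isotropic region, i.e.\ no disk of radius $r>r_i$ that depends only on the multiset of ranges, is certifiable. Let $j^\ast$ be a farthest neighbor, at distance $d_i^{\max}$. The agent cannot see bearings, so the configuration in which $j^\ast$ lies in some direction $u$ is indistinguishable from any rotation of it. The rule is anonymous and depends only on ranges, so agent $j^\ast$ uses a radius that is a function of its own range multiset. I would build a feasible configuration in which $j^\ast$ also has $i$ as its farthest neighbor. The simplest choice is a two-agent component $\{i,j^\ast\}$, with the other agents, if needed, placed closer to $j^\ast$ than $d_i^{\max}$. In that configuration the rule assigns $j^\ast$ a radius of at least $r_i$. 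If both agents move directly apart, their distance becomes $d_i^{\max}+r+r_{j^\ast}>d_i^{\max}+2r_i=V$, and the edge is broken. This is exactly the worst case in Proposition~\ref{prop:maximal-radius}.

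The final sentence of the corollary then follows. $AR_i$ depends on the ranges only through $\max_j d_{ij}$, so the other ranges carry no extra certification power.

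\emph{Main obstacle.} The hard step is making the maximality claim rigorous. The issue is that ``common isotropic radius'' must be read as a symmetric rule applied to the pair. Otherwise an asymmetric rule could give $i$ more and $j^\ast$ less. I would handle this with a symmetry argument over the worst-case pair, in the same spirit as the proposition: in a configuration where both endpoints see the same farthest range $d$, anonymity forces equal radii, and so each radius is at most $(V-d)/2$. I would also check that such a symmetric configuration exists whenever $i$ has an arbitrary observed range multiset. The construction is to mirror $i$'s other neighbors around $j^\ast$ in a feasible planar placement.
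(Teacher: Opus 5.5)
Your sufficiency half is fine. It is the triangle-inequality argument of Theorem~\ref{thm:edge}, and your observation $d_{ij}\le d_i^{\max}$ is the same monotonicity the paper uses: it intersects the concentric per-neighbor disks of radius $(V-d_{ij})/2$ supplied by Proposition~\ref{prop:maximal-radius} and keeps the smallest. The gap is in the maximality half. Against a competing rule, the claim that the rule assigns $j^\ast$ a radius of at least $r_i$ is unjustified. Your proposed repair also fails. Anonymity equalizes radii only when the two \emph{full} range multisets coincide, not merely their maxima, and such a configuration need not exist. Take an observed multiset $S_i=\{d,\epsilon\}$ with $0<\epsilon<d/2$ and $d+\epsilon\le V$. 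The second neighbor $k$ then lies at distance in $[d-\epsilon,d+\epsilon]\subset(\epsilon,V]$ from $j^\ast$. So $j^\ast$ always sees $k$, at a range different from $\epsilon$, and no placement, mirrored or otherwise, gives $S_{j^\ast}=S_i$.

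Your worry about asymmetric rules is in fact fatal to the general claim. The statement that no anonymous range-only disk rule can safely give $i$ more than $(V-d_i^{\max})/2$ is false. Start from the farthest-neighbor rule and add $\delta$ on the multiset $\{d,\epsilon\}$ above. Subtract $\delta$ (floored at $0$) on every multiset that contains $d$ together with a second range in $[d-\epsilon,d+\epsilon]$; this class excludes $\{d,\epsilon\}$ itself. Check safety edge by edge:
\begin{itemize}
\item The farthest neighbor of a $\{d,\epsilon\}$-agent always has a multiset in the lowered class, so that edge stays safe whenever $\delta\le(V-d)/2$.
\item The short edge stays safe whenever $\delta\le(d-\epsilon)/2$.
\item Lowering radii elsewhere is harmless.
\end{itemize}
Hence ``complete'' only makes sense relative to the paper's comparison class: a common radius for each pair, as in Proposition~\ref{prop:maximal-radius}. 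Within that class maximality is immediate, since any radius above $(V-d_i^{\max})/2$ violates Proposition~\ref{prop:maximal-radius} for the pair $(i,j^\ast)$. This is all the paper's proof uses.

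Your unilateral-deviation argument, in which $j^\ast$ follows the farthest-neighbor rule, also works under this reading. Drop the mirroring step and fix the construction, though. A ``two-agent component'' changes $i$'s observed multiset. Instead, place $i$'s other neighbors on the segment from $p_i$ to $p_{j^\ast}$. This keeps $S_i$ intact and makes $j^\ast$'s farthest range exactly $d_i^{\max}$.
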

\begin{proof}
Each visible neighbor $j$ induces a concentric safe disk of radius $(V-d_{ij}(t))/2$. Their intersection is the disk with the smallest radius. Since $d\mapsto(V-d)/2$ is decreasing, this radius is induced by the farthest current neighbor.
\end{proof}

Agent $i$ samples $U_i(t)$ uniformly on the unit circle and updates
\begin{equation}p_i(t+1)=p_i(t)+r_i(t)U_i(t).\label{eq:update}\end{equation}
The boundary is selected to exhaust the certified displacement budget. This does not imply that every pairwise distance rises in every round. Safety is provided by the radius, while the random direction explores the safe set without a bearing estimate. Since all pairwise safety disks are concentric at the agent, the complete current neighborhood reduces to the single scalar $d_i^{\max}$.

Let $\mathcal F_t$ denote the history immediately before round-$t$ directions are sampled.

\subsection{Information Use and Design Rationale}
The rule is deliberately minimal. It does not require neighborhood ordering, persistent measurement association, geometric reconstruction, or a target formation. All pairwise range-only safety disks are centered at the current agent, so their intersection is determined by the smallest radius, equivalently by the farthest measured neighbor range. Thus the full current range multiset reduces to one scalar statistic, $d_i^{\max}(t)$.

Safety and dispersion are separated. The radius certifies that every possible direction is safe, while the random direction explores the safe set without reconstructing a bearing. Choosing the boundary uses the entire certified displacement budget. A random move can increase some neighbor distances and decrease others, but no current edge can be lost. The following analytical ablation makes the boundary choice precise.

\begin{proposition}[Step-length ablation]
For $r_i^{(\alpha)}=\alpha(V-d_i^{\max})/2$, $\alpha\in[0,1]$, every $\alpha$ preserves all edges. For two agents,
\begin{equation}\mathbb E[D_{t+1}^2-D_t^2\mid\mathcal F_t]=\frac{\alpha^2(V-D_t)^2}{2}.\end{equation}
Thus $\alpha=1$ maximizes one-step expected squared-distance progress in this safe isotropic family.
\end{proposition}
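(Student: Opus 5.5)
The plan has three parts: safety for every $\alpha$, an exact two-agent moment computation, and a one-line optimization in $\alpha$.

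\emph{Safety.} Fix $\alpha\in[0,1]$ and any current edge $(i,j)$ with $d_{ij}(t)\le V$. By definition of $d_i^{\max}(t)$ we have $d_{ij}(t)\le d_i^{\max}(t)$ and $d_{ij}(t)\le d_j^{\max}(t)$. Hence $r_i^{(\alpha)}\le (V-d_{ij})/2$ and $r_j^{(\alpha)}\le (V-d_{ij})/2$. The triangle inequality then gives $\|p_i(t+1)-p_j(t+1)\|\le d_{ij}+r_i^{(\alpha)}+r_j^{(\alpha)}\le V$ for every realization of the directions. This is exactly the sufficiency half of Proposition~\ref{prop:maximal-radius}, applied through Corollary~\ref{cor:agent-allowable-region} with radii shrunk by $\alpha\le1$.

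\emph{Two-agent drift.} With $n=2$ and $D_t=\|p_1(t)-p_2(t)\|$, each agent's only neighbor is the other, so $d_1^{\max}=d_2^{\max}=D_t$. Both agents therefore use the common radius $r=\alpha(V-D_t)/2$, which is $\mathcal F_t$-measurable. Write $\Delta=p_1(t)-p_2(t)$. Then $p_1(t+1)-p_2(t+1)=\Delta+r(U_1-U_2)$, and
\[
D_{t+1}^2=D_t^2+2r\langle\Delta,U_1-U_2\rangle+r^2\|U_1-U_2\|^2 .
\]
Conditional on $\mathcal F_t$, $U_1$ and $U_2$ are independent and uniform on the unit circle, so $\mathbb E[U_k]=0$ and the cross term vanishes. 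Also $\mathbb E\|U_1-U_2\|^2=2-2\,\mathbb E\langle U_1,U_2\rangle=2$. So the conditional increment is $2r^2=\alpha^2(V-D_t)^2/2$. If $D_t=0$, the direction $\Delta$ is zero and the formula still holds, so no separate case is needed.

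\emph{Optimality.} For fixed $D_t<V$ the increment $\alpha^2(V-D_t)^2/2$ is strictly increasing in $\alpha\in[0,1]$, so it is maximized at $\alpha=1$. When $D_t=V$ it is identically zero, consistent with locking at the boundary.

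The only step needing care is justifying that the conditional expectation treats $r$ as a constant. This rests on $r$ depending only on $D_t$, together with the independence and zero mean of the fresh directions given $\mathcal F_t$. Everything else is a routine calculation.
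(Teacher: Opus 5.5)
Your proposal is correct and follows essentially the paper's proof. Safety holds because the $\alpha$-scaled radius is dominated by the certified radius; you rerun the triangle-inequality step of Theorem~\ref{thm:edge} instead of citing it. The drift is the computation of Lemma~\ref{lem:drift} with the common radius scaled by $\alpha$, which multiplies the quadratic term by $\alpha^2$.
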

\begin{proof}Safety follows from $r_i^{(\alpha)}\le r_i$. Scaling both movement vectors by $\alpha$ multiplies the quadratic drift term by $\alpha^2$.\end{proof}

\section{Safety and Structural Guarantees}
\begin{theorem}[Edge preservation]\label{thm:edge}For every $t\ge0$, $E(t)\subseteq E(t+1)$.\end{theorem}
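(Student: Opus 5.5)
The plan is to fix an arbitrary round $t$ and an arbitrary edge $(i,j)\in E(t)$, and show directly that $\|p_i(t+1)-p_j(t+1)\|\le V$ for every realization of the directions $U_i(t),U_j(t)$. Write $d_{ij}=\|p_i(t)-p_j(t)\|\le V$. Since $(i,j)\in E(t)$, we have $j\in N_i(t)$ and $i\in N_j(t)$. Hence, by the definition \eqref{eq:allowable-radius}, $d_i^{\max}(t)\ge d_{ij}$ and $d_j^{\max}(t)\ge d_{ij}$. Because $d\mapsto (V-d)/2$ is decreasing, this gives $r_i(t)\le (V-d_{ij})/2$ and $r_j(t)\le (V-d_{ij})/2$, which is exactly the reduction in Corollary~\ref{cor:agent-allowable-region}. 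Both radii are nonnegative because $d^{\max}\le V$ by the definition of $N_i(t)$.

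Next I would apply the update \eqref{eq:update} and the triangle inequality, using $\|U_i\|=\|U_j\|=1$:
\begin{equation*}
\|p_i(t+1)-p_j(t+1)\|\le d_{ij}+r_i(t)+r_j(t)\le d_{ij}+(V-d_{ij})=V.
\end{equation*}
This is the sufficiency half of Proposition~\ref{prop:maximal-radius}, now applied with possibly unequal radii, each bounded by the pairwise radius. Since the bound holds for every pair of directions, it holds surely rather than only almost surely. It then follows that $(i,j)\in E(t+1)$, and hence $E(t)\subseteq E(t+1)$.

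There is no real obstacle here. The one point that needs care is that the two agents generally have different radii, determined by their own farthest neighbors rather than by the shared edge. So the argument must use the monotonicity of the radius in $d^{\max}$ to bound each radius by the pairwise margin. I would also note that the synchronous move of every other agent is irrelevant to the edge $(i,j)$, and that the closed inequality $\le V$ in the edge definition covers the boundary case $d_{ij}=V$: there both radii vanish and the distance is unchanged.
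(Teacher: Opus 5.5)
Your proof is correct and follows the paper's argument: you bound each endpoint's radius by the pairwise margin $(V-d_{ij})/2$ using $d_i^{\max},d_j^{\max}\ge d_{ij}$, then apply the triangle inequality to get $\|p_i(t+1)-p_j(t+1)\|\le d_{ij}+r_i+r_j\le V$. Your extra remarks (the radii are nonnegative, the bound holds for every realization of the directions, and the boundary case $d_{ij}=V$ is covered) spell out what the paper's two-line proof leaves implicit.
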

\begin{proof}For $(i,j)\in E(t)$, write $d_{ij}=\|p_i-p_j\|$. Then $r_i,r_j\le(V-d_{ij})/2$, and
\begin{equation}\|p_i(t+1)-p_j(t+1)\|\le d_{ij}+r_i+r_j\le V.\end{equation}
\end{proof}
\begin{corollary}If $G(0)$ is connected, then $G(t)$ is connected for all $t$.\end{corollary}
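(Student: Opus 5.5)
The plan is to obtain the corollary by induction on $t$, with Theorem~\ref{thm:edge} supplying the inductive step. Connectivity of a graph on a fixed vertex set $\mathcal A$ is monotone under adding edges, so the argument reduces to monotonicity of the edge sets together with this elementary fact about spanning subgraphs.

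First I will iterate Theorem~\ref{thm:edge}. It gives $E(t)\subseteq E(t+1)$ for every $t\ge0$, and chaining these inclusions yields $E(0)\subseteq E(t)$ for all $t$. This holds for every realization of the random directions $U_i(t)$, because the theorem is deterministic: it uses only the radius bound $r_i(t)\le (V-d_{ij}(t))/2$ from \eqref{eq:allowable-radius}, and not the sampled direction. Next I will note that the vertex set $\mathcal A$ is the same for every $t$, so $G(0)$ is a spanning subgraph of $G(t)$. Finally, given any two agents $i,j\in\mathcal A$, connectivity of $G(0)$ provides a path $i=v_0,v_1,\ldots,v_k=j$ with each $(v_{\ell},v_{\ell+1})\in E(0)\subseteq E(t)$. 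The same sequence of vertices is therefore a path in $G(t)$, so $G(t)$ is connected.

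There is no substantive obstacle. The only point that needs care is that the inclusion holds surely, for every choice of directions, and not merely almost surely or in expectation. That is what makes the conclusion deterministic for all $t$ simultaneously. Theorem~\ref{thm:edge} was proved by the triangle inequality for arbitrary unit vectors $U_i,U_j$, so this requirement is already met.
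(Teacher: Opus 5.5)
Your proposal is correct and follows the paper's own argument: chain Theorem~\ref{thm:edge} to get $E(0)\subseteq E(t)$, so $G(0)$ is a spanning subgraph of $G(t)$ and connectivity carries over. Your explicit path-lifting step and your remark that the inclusion holds for every realization of the directions make explicit what the paper's one-line proof leaves implicit.
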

\begin{proof}Theorem~\ref{thm:edge} gives $E(0)\subseteq E(t)$, so $G(t)$ contains connected $G(0)$ as a spanning subgraph.\end{proof}

The edge-preservation proof is independent of the distribution of the selected directions. Uniform random directions are used for the stretching analysis, not for safety. In particular, any direction-selection policy remains safe as long as each displacement respects the same certified radius. This modularity allows the range-only safety layer to be separated from future task-specific direction policies.

\begin{theorem}[Boundary locking]\label{thm:boundary-locking}
If $\|p_i(t)-p_j(t)\|=V$ for some $(i,j)\in E(t)$, then
\begin{equation}
p_i(s)=p_i(t),\qquad p_j(s)=p_j(t)
\end{equation}
for every $s\ge t$.
\end{theorem}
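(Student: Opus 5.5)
We argue by induction on $s\ge t$, keeping track of the pair $(i,j)$ together with its edge. The inductive hypothesis at time $s$ is the following: $(i,j)\in E(s)$, $\|p_i(s)-p_j(s)\|=V$, $p_i(s)=p_i(t)$, and $p_j(s)=p_j(t)$. At $s=t$ this is exactly the assumption of the theorem.

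For the inductive step, fix $s\ge t$ with the hypothesis in force. Since $j\in N_i(s)$, the definition \eqref{eq:allowable-radius} gives $d_i^{\max}(s)\ge\|p_i(s)-p_j(s)\|=V$. On the other hand, every neighbor lies within distance $V$, so $d_i^{\max}(s)\le V$. Hence $d_i^{\max}(s)=V$ and $r_i(s)=0$. The same argument with the roles exchanged gives $r_j(s)=0$. By the update \eqref{eq:update}, $p_i(s+1)=p_i(s)=p_i(t)$ and $p_j(s+1)=p_j(s)=p_j(t)$, whatever directions $U_i(s)$ and $U_j(s)$ are sampled. The distance between the two agents is therefore still exactly $V$, so $(i,j)\in E(s+1)$. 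This membership also follows from Theorem~\ref{thm:edge}. The hypothesis thus holds at $s+1$, and induction completes the proof.

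The argument is essentially routine. The one point that needs care is the observation that the farthest-neighbor statistic cannot exceed $V$. This is what forces $d_i^{\max}=V$ exactly, and hence a zero radius, rather than merely bounding the radius from above. It is also worth noting explicitly that the conclusion holds for every realization of the random directions, so the statement is deterministic and does not hold merely almost surely.
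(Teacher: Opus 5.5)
Your proof is correct and follows essentially the same argument as the paper. Both endpoints have a visible neighbor at distance exactly $V$ and none farther, so $d_i^{\max}=d_j^{\max}=V$, the radii vanish, and neither agent moves. Your explicit induction formalizes the paper's remark that ``the same argument applies recursively at every subsequent round.''
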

\begin{proof}
Since $(i,j)\in E(t)$ has length $V$, both endpoints have a visible neighbor at distance $V$. No visible-neighbor distance can exceed $V$, so
\begin{equation}
d_i^{\max}(t)=d_j^{\max}(t)=V.
\end{equation}
Equation~\eqref{eq:allowable-radius} therefore gives $r_i(t)=r_j(t)=0$. By the update rule~\eqref{eq:update}, neither endpoint moves at round $t$. Their mutual distance consequently remains $V$, and the same argument applies recursively at every subsequent round.
\end{proof}

\begin{definition}A configuration is \emph{absorbing} if $r_i=0$ for every agent.\end{definition}
\begin{theorem}Every absorbing configuration contains at least $n/2$ edges of length $V$.\end{theorem}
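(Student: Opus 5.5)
The plan is a counting argument on the subgraph of boundary edges. Let $B\subseteq E$ be the set of edges of length exactly $V$ in an absorbing configuration. By the definition of absorbing, $r_i=0$ for every agent $i$. Equation~\eqref{eq:allowable-radius} then gives $d_i^{\max}=V$.

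The first step is to fix a convention for agents with no visible neighbor, where $d_i^{\max}$ would be a maximum over an empty set. I will assume the maximum over the empty set is $0$, so that an isolated agent has $r_i=V/2>0$. Under this convention, $r_i=0$ forces $N_i\neq\emptyset$. Since $d_i^{\max}=V$ is attained by some neighbor $j\in N_i$, we get $\|p_i-p_j\|=V$, and so $(i,j)\in B$. Hence every agent is incident to at least one edge of $B$. (If one prefers not to fix a convention, connectivity of $G(0)$ together with Theorem~\ref{thm:edge} already guarantees $N_i\neq\emptyset$ for $n\ge 2$.)

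The second step counts these edges. The subgraph $(\mathcal A,B)$ has minimum degree at least $1$, so the handshake lemma gives
\begin{equation*}
2|B|=\sum_{i\in\mathcal A}\deg_B(i)\ge n,
\end{equation*}
that is, $|B|\ge n/2$. I would add a one-line remark on tightness. The bound is attained by a perfect matching of boundary edges, for example pairs at distance $V$ whose other mutual distances are below $V$ but which remain connected through shorter edges. Tightness is not needed for the statement.

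The main obstacle is only the empty-neighborhood convention, which the connectivity corollary disposes of. It is also worth stressing that the argument is purely combinatorial and does not use Theorem~\ref{thm:boundary-locking}.
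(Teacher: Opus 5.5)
Your proof is correct and is essentially the paper's argument: absorption forces $d_i^{\max}=V$ at every agent, so every vertex of the boundary-edge subgraph $(\mathcal A,E_V)$ has degree at least one, and the handshake lemma gives $2|E_V|\ge n$. Your handling of the empty-neighborhood case, via either the convention or connectivity preservation, makes explicit a step the paper leaves implicit, and your tightness remark is also correct.
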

\begin{proof}
Let
\begin{equation}
E_V=\{(i,j)\in E:\|p_i-p_j\|=V\}
\end{equation}
denote the set of visibility-boundary edges. In the graph $(\mathcal A,E_V)$, every vertex has degree at least one. Hence
\begin{equation}
2|E_V|=\sum_{i=1}^{n}\deg_{E_V}(i)\ge n.
\end{equation}
\end{proof}
This is conditional on absorption and does not prove general-swarm convergence. It nevertheless shows that if the swarm fully immobilizes, stretching is structural rather than transient: every agent is incident to at least one visibility-boundary edge. Exact equality is a property of the idealized model; numerical implementations must treat a tolerance band separately.

Boundary locking is irreversible in the ideal exact-arithmetic model. Once one incident edge reaches $V$, the corresponding agent has farthest-neighbor distance $V$ and can no longer move. The lower bound on boundary edges is conditional on the system reaching an absorbing configuration, but it shows that complete immobilization necessarily contains a linear amount of exact edge stretching rather than merely a transient increase in diameter. In numerical implementations, near-boundary edges must be distinguished from exact locked edges because tolerance and finite precision can otherwise create apparent locking or unlocking events.

\section{Two-Agent Stochastic Stretching}
Let $X_t=p_2(t)-p_1(t)$ and $D_t=\|X_t\|$. Then
\begin{equation}X_{t+1}=X_t+\frac{V-D_t}{2}(U_2(t)-U_1(t)).\label{eq:relative}\end{equation}
Because $D_t\le V$ by edge preservation, the process remains in the compact interval $[0,V]$. The boundary $V$ is absorbing, while the next distance below $V$ depends only on the current relative displacement and two fresh independent directions. Rotational symmetry therefore reduces the stochastic state relevant to distance evolution to the scalar $D_t$.

\begin{lemma}[Positive drift]\label{lem:drift}
\begin{equation}\mathbb E[D_{t+1}^2\mid\mathcal F_t]=D_t^2+\frac{(V-D_t)^2}{2}.\label{eq:drift}\end{equation}
\end{lemma}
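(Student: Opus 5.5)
The plan is to expand the squared norm of the relative update \eqref{eq:relative} and take conditional expectations term by term, using only that $U_1(t),U_2(t)$ are independent of each other and of $\mathcal F_t$, and uniform on the unit circle. Write $r=(V-D_t)/2$ and $W=U_2(t)-U_1(t)$. Both $r$ and $X_t$ are $\mathcal F_t$-measurable, since the radius depends only on the current distance: for two agents, edge preservation (Theorem~\ref{thm:edge}) together with connectivity of $G(0)$ gives $N_1(t)=\{2\}$, $N_2(t)=\{1\}$ and hence $d_1^{\max}=d_2^{\max}=D_t$.

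Expanding gives
\begin{equation*}
D_{t+1}^2=\|X_t+rW\|^2=D_t^2+2r\,\langle X_t,W\rangle+r^2\|W\|^2 .
\end{equation*}
The conditional expectation is then taken term by term.

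\emph{Cross term.} A uniform direction on the circle has mean zero, so $\mathbb E[U_k(t)\mid\mathcal F_t]=0$. Therefore $\mathbb E[\langle X_t,W\rangle\mid\mathcal F_t]=\langle X_t,\mathbb E[W\mid\mathcal F_t]\rangle=0$.

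\emph{Quadratic term.} Since $\|U_k\|=1$,
\begin{equation*}
\|W\|^2=2-2\langle U_1(t),U_2(t)\rangle .
\end{equation*}
By independence, $\mathbb E[\langle U_1,U_2\rangle\mid\mathcal F_t]=\langle \mathbb E U_1,\mathbb E U_2\rangle=0$. Hence $\mathbb E[\|W\|^2\mid\mathcal F_t]=2$.

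Combining the three terms,
\begin{equation*}
\mathbb E[D_{t+1}^2\mid\mathcal F_t]=D_t^2+2r^2=D_t^2+\frac{(V-D_t)^2}{2},
\end{equation*}
which is \eqref{eq:drift}.

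There is no substantive obstacle; the point needing care is the measurability and independence bookkeeping. Specifically, the fresh directions must be independent of $\mathcal F_t$, and the radius must be computed from pre-move positions. One should also check the boundary case $D_t=V$ separately: there $r=0$, consistent with Theorem~\ref{thm:boundary-locking}, and both sides of \eqref{eq:drift} equal $V^2$.
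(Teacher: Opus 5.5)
Your proposal is correct and follows the paper's proof essentially step for step: expand $\|X_t+rW\|^2$, use $\mathbb E[U_k(t)]=0$ to remove the cross term, and use independence to get $\mathbb E[\|U_2(t)-U_1(t)\|^2]=2$. Your added bookkeeping makes explicit what the paper leaves implicit, namely that $d_1^{\max}=d_2^{\max}=D_t$ via edge preservation and that the fresh directions are independent of $\mathcal F_t$.
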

\begin{proof}
Conditioning on $\mathcal F_t$ and expanding the squared norm gives
\begin{align}
D_{t+1}^2
&=D_t^2+(V-D_t)\langle X_t,U_2(t)-U_1(t)\rangle\nonumber\\
&\quad+\frac{(V-D_t)^2}{4}\|U_2(t)-U_1(t)\|^2.
\end{align}
Uniformity gives $\mathbb E[U_i(t)]=0$, so the conditional expectation of the cross term vanishes. Independence gives $\mathbb E[\langle U_1(t),U_2(t)\rangle]=0$, and therefore
\begin{equation}
\mathbb E[\|U_2(t)-U_1(t)\|^2]=2,
\end{equation}
which yields the claim.
\end{proof}

\begin{theorem}[Almost-sure convergence]For every $D_0\in[0,V]$, $D_t\to V$ almost surely.\end{theorem}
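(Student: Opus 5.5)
Since $D_t\le V$ for all $t$ by Theorem~\ref{thm:edge}, the process $Y_t:=D_t^2$ is bounded in $[0,V^2]$. Lemma~\ref{lem:drift} shows that $Y_t$ is a submartingale with respect to $\mathcal F_t$. The plan is to use the martingale convergence theorem to get an almost-sure limit. We then identify that limit as $V$ by summing the drift.

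\textbf{Step 1 (convergence of $Y_t$).} $Y_t$ is a bounded submartingale, since $\mathbb E[Y_{t+1}\mid\mathcal F_t]\ge Y_t$. Doob's convergence theorem therefore gives $Y_t\to Y_\infty$ almost surely, with $Y_\infty\in[0,V^2]$. Hence $D_t\to D_\infty:=\sqrt{Y_\infty}$ almost surely.

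\textbf{Step 2 (summability of the drift).} Take expectations in \eqref{eq:drift} and telescope:
\begin{equation*}
\mathbb E[Y_T]-Y_0=\frac12\sum_{t=0}^{T-1}\mathbb E\bigl[(V-D_t)^2\bigr].
\end{equation*}
The left side is at most $V^2$, so
\begin{equation*}
\sum_{t\ge0}\mathbb E\bigl[(V-D_t)^2\bigr]\le 2V^2<\infty.
\end{equation*}
By monotone convergence, $\mathbb E\bigl[\sum_t (V-D_t)^2\bigr]<\infty$. It follows that $\sum_t (V-D_t)^2<\infty$ almost surely, and so $(V-D_t)^2\to0$ almost surely.

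\textbf{Step 3 (conclusion).} Step 2 already gives $D_t\to V$ almost surely, so Step 1 is only a consistency check: $D_\infty=V$. The case $D_0=V$ is immediate from Theorem~\ref{thm:boundary-locking}, and the case $D_0=0$ needs no special treatment.

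The only delicate point is making sure the drift identity can be telescoped in expectation. This needs integrability, which holds trivially because everything is bounded. It also needs the identity to hold at every step, including after the process is absorbed at $V$. It does, because both sides then equal $V^2$.

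Step 2 is where the argument could have failed if the drift were only bounded below on compact subsets of $[0,V)$. The explicit quadratic form $(V-D_t)^2/2$ avoids any need for a separate argument ruling out accumulation of $D_t$ at some point below $V$.
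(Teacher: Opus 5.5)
Your proof is correct and is essentially the paper's: both telescope the drift identity of Lemma~\ref{lem:drift} to get $\sum_{t}\mathbb E[(V-D_t)^2]\le 2(V^2-D_0^2)<\infty$. The only difference is the last step: you use monotone convergence to conclude $\sum_t (V-D_t)^2<\infty$ almost surely, while the paper applies Markov's inequality and Borel--Cantelli over rational $\varepsilon$, and your Doob submartingale step is, as you note, not needed.
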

\begin{proof}Summing \eqref{eq:drift} gives
\begin{equation}\sum_{t=0}^{\infty}\mathbb E[(V-D_t)^2]\le2(V^2-D_0^2)<\infty.\end{equation}
Markov's inequality gives $\sum_t\Pr(V-D_t\ge\varepsilon)<\infty$. Borel--Cantelli \cite{durrett2019probability}, applied to positive rational $\varepsilon$, proves convergence.\end{proof}
\begin{proposition}[No finite exact hit]\label{prop:no-finite-exact-hit}
If $D_0<V$ and the directions are continuously uniform, then
\begin{equation}
\Pr(\tau_V<\infty)=0,
\qquad
\tau_V=\inf\{t\ge0:D_t=V\}.
\end{equation}
\end{proposition}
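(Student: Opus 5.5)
The plan is to show that for every $t$ with $D_t<V$, the conditional probability $\Pr(D_{t+1}=V\mid\mathcal F_t)$ vanishes. A countable union over $t$ then gives the claim. Since $V$ is absorbing, $\{\tau_V<\infty\}=\bigcup_{t\ge0}\{D_t<V,\,D_{t+1}=V\}$ whenever $D_0<V$. So it suffices to prove $\Pr(D_t<V,\,D_{t+1}=V)=0$ for each fixed $t$ and sum over the countably many $t$.

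Fix $t$ and condition on $\mathcal F_t$ on the event $D_t<V$. Then $X_t$ is fixed and $\rho=(V-D_t)/2>0$. By \eqref{eq:relative},
\begin{equation*}
X_{t+1}=X_t+\rho\,(U_2(t)-U_1(t)),
\end{equation*}
where $U_1(t),U_2(t)$ are independent and uniform on the unit circle, independent of $\mathcal F_t$. Write $U_k=(\cos\theta_k,\sin\theta_k)$ with $\theta_1,\theta_2$ i.i.d.\ uniform on $[0,2\pi)$. The event $\{D_{t+1}=V\}$ is the zero set of
\begin{equation*}
f(\theta_1,\theta_2)=\|X_t+\rho(U_2-U_1)\|^2-V^2 .
\end{equation*}
This is a trigonometric polynomial, hence real-analytic on the torus. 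A real-analytic function that is not identically zero has a zero set of two-dimensional Lebesgue measure zero. Because $(\theta_1,\theta_2)$ has a density, the conditional probability is therefore zero.

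The key step is to check that $f\not\equiv0$. Take $\theta_1=\theta_2$. Then $U_2-U_1=0$ and $f=D_t^2-V^2<0$. This concludes the argument.

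A more elementary route avoids the analytic-zero-set fact. Condition additionally on $\theta_1$. Then $X_{t+1}=c+\rho U_2$ with $c=X_t-\rho U_1$ fixed, and $\|c+\rho U_2\|=V$ describes the intersection of the circle of radius $\rho$ around $c$ with the circle of radius $V$ around the origin. These two circles coincide only if $c=0$ and $\rho=V$, which is impossible since $\rho\le V/2$. So the intersection has at most two points, and the continuous uniform $U_2$ hits them with probability zero. Applying Fubini over $\theta_1$ then gives $\Pr(D_{t+1}=V\mid\mathcal F_t)=0$. I would use this version because it is cleaner.

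The only real obstacle is measure-theoretic bookkeeping. We need the conditioning on $\mathcal F_t$ to be legitimate, which follows from the independence of the fresh directions and a regular conditional distribution. We also need to make the remark that with $D_0<V$ the event $\tau_V<\infty$ requires a first crossing step from below $V$. Taking expectations of the zero conditional probability and applying countable subadditivity completes the proof.
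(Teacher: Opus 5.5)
Your proof is correct, but it takes a different route from the paper.

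\textbf{The paper's route.} The paper uses the fact that $V$ is the \emph{extremal} value of $D_{t+1}$. From $D_{t+1}\le D_t+\rho\|U_2-U_1\|\le D_t+2\rho=V$, hitting $V$ forces equality in both inequalities. The first equality requires antipodal directions. The second, when $X_t\neq0$, requires alignment with the separation, i.e.\ $U_2=-U_1=X_t/D_t$. The exceptional set is therefore explicitly a single point of $\mathbb S^1\times\mathbb S^1$, or the antidiagonal circle $\{U_2=-U_1\}$ when $X_t=0$. This is exactly the ``move directly apart'' configuration from Proposition~\ref{prop:maximal-radius}.

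\textbf{Your route.} Your circle-intersection argument never uses that $V$ is the maximum. For fixed $U_1$, it only uses that the circle traced by $X_{t+1}$ differs from the target circle, because $\rho\le V/2<V$.

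\textbf{What each buys.} The paper's version avoids conditioning on one direction and Fubini, and it identifies the dangerous direction pairs precisely. Yours costs a Fubini step but is more robust. Replace $V$ by any level $s\ge0$; coincidence of the two circles is then excluded by $c\neq0$, which fails for at most one value of $U_1$. The same argument then gives $\Pr(D_{t+1}=s\mid\mathcal F_t)=0$, so the conditional law of $D_{t+1}$ has no atoms whenever $D_t<V$. The triangle-inequality argument does not give this directly. Your analytic-zero-set variant is also valid, but it relies on a heavier lemma than either.

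\textbf{A minor correction.} The identity $\{\tau_V<\infty\}=\bigcup_t\{D_t<V,\,D_{t+1}=V\}$ does not follow from absorption of $V$. It follows from $D_t\le V$ (edge preservation) together with the minimality of $\tau_V$, which is what your closing remark about a first crossing step from below already uses.
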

\begin{proof}
Condition on $\mathcal F_t$ and $D_t<V$. Equality at round $t+1$ requires maximum relative-step magnitude and equality in the triangle inequality. For $X_t\neq0$, the directions must be antipodal and aligned with the current separation line; for $X_t=0$, antipodality is still required. These direction pairs form a measure-zero subset of $\mathbb S^1\times\mathbb S^1$. A countable union over finite rounds proves the claim.
\end{proof}
Hence the agents almost surely approach the boundary without hitting it exactly in finite time, and their allowable radii converge to zero.

Let $\tau_\varepsilon=\inf\{t:D_t\ge V-\varepsilon\}$.
\begin{theorem}[Near-boundary hitting time]\label{thm:near-boundary-hitting}
If $D_0=d_0<V-\varepsilon$, then
\begin{equation}\mathbb E[\tau_\varepsilon\mid D_0=d_0]\le\frac{2(V^2-d_0^2)}{\varepsilon^2}.\end{equation}
\end{theorem}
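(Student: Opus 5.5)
We will use the standard drift (supermartingale) argument for hitting times, with $D_t^2$ as the Lyapunov function and Lemma~\ref{lem:drift} supplying a uniform lower bound on the drift before $\tau_\varepsilon$.

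On the event $\{t<\tau_\varepsilon\}$ we have $D_t<V-\varepsilon$, so $V-D_t>\varepsilon$. Since $\{t<\tau_\varepsilon\}$ is $\mathcal F_t$-measurable, equation~\eqref{eq:drift} gives
\begin{equation}
\mathbb E\bigl[(D_{t+1}^2-D_t^2)\mathbf 1_{\{t<\tau_\varepsilon\}}\mid\mathcal F_t\bigr]\ge\frac{\varepsilon^2}{2}\mathbf 1_{\{t<\tau_\varepsilon\}}.
\end{equation}
We therefore define the process $M_t=D_{t\wedge\tau_\varepsilon}^2-\frac{\varepsilon^2}{2}(t\wedge\tau_\varepsilon)$. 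On $\{t\ge\tau_\varepsilon\}$ its increment is zero, so the display shows that $M_t$ is a submartingale with respect to $\mathcal F_t$. Rather than invoke optional stopping at the possibly unbounded time $\tau_\varepsilon$, we work with the deterministic bounded times $t\wedge\tau_\varepsilon$, which avoids integrability issues.

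Taking expectations, $\mathbb E[M_t]\ge M_0=d_0^2$ for every $t$. This gives
\begin{equation}
\frac{\varepsilon^2}{2}\mathbb E[t\wedge\tau_\varepsilon]\le\mathbb E[D_{t\wedge\tau_\varepsilon}^2]-d_0^2\le V^2-d_0^2,
\end{equation}
where the last inequality uses $D_s\le V$ for all $s$, which holds by Theorem~\ref{thm:edge} applied to the pair. We then let $t\to\infty$. By monotone convergence, $\mathbb E[\tau_\varepsilon]\le 2(V^2-d_0^2)/\varepsilon^2$, which is the claimed bound. As a by-product, $\tau_\varepsilon<\infty$ almost surely, and this matches the almost-sure convergence theorem.

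The main point requiring care is the conditioning step. The drift identity is stated conditional on $\mathcal F_t$, and the lower bound $(V-D_t)^2/2>\varepsilon^2/2$ holds only on $\{t<\tau_\varepsilon\}$. We must check that this event is determined before the round-$t$ directions are drawn. It is, because $\tau_\varepsilon$ is a stopping time for the filtration in which $D_t$ is $\mathcal F_t$-measurable. Once this is settled, the rest is routine. The strict inequality $d_0<V-\varepsilon$ only ensures that $\tau_\varepsilon\ge1$; the bound holds even without it.
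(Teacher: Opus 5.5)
Your proposal is correct and follows the paper's proof essentially step for step. Both use $D_t^2$ as the Lyapunov function, bound its drift below by $\varepsilon^2/2$ on $\{t<\tau_\varepsilon\}$ via Lemma~\ref{lem:drift}, take expectations at the bounded times $T\wedge\tau_\varepsilon$, use $D\le V$, and let $T\to\infty$ by monotone convergence; your explicit submartingale $M_t$ and the check that $\{t<\tau_\varepsilon\}\in\mathcal F_t$ simply make rigorous what the paper calls summing the stopped drift.
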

\begin{proof}
Let $\sigma_T=T\wedge\tau_\varepsilon$. On $\{t<\tau_\varepsilon\}$, the agents still execute the original maximal safe jump of radius $(V-D_t)/2$; the proof does not replace this radius by $\varepsilon/2$. Since $V-D_t>\varepsilon$ before target entry, Lemma~\ref{lem:drift} gives conditional drift at least $\varepsilon^2/2$. Summing the stopped drift yields
\begin{equation}
\mathbb E[D_{\sigma_T}^2]-d_0^2
\ge \frac{\varepsilon^2}{2}\mathbb E[T\wedge\tau_\varepsilon].
\end{equation}
Because $D_{\sigma_T}\le V$,
\begin{equation}
\mathbb E[T\wedge\tau_\varepsilon]
\le \frac{2(V^2-d_0^2)}{\varepsilon^2}.
\end{equation}
Letting $T\to\infty$ and applying monotone convergence proves the result.
\end{proof}
The bound establishes finite expected hitting time but can be highly conservative because it discards the substantially larger drift obtained when $D_t$ is far below $V-\varepsilon$.
For $D_0=0$, both agents initially move with radius $V/2$. If $\Delta$ is the difference between their independent uniform directions, then
\begin{equation}
D_1=V\left|\sin\left(\frac{\Delta}{2}\right)\right|.
\end{equation}
The two endpoints form a random chord of the radius-$V/2$ circle. Direct integration gives
\begin{equation}\mathbb E[D_1]=2V/\pi,\qquad\mathbb E[D_1^2]=V^2/2.\label{eq:moments}\end{equation}

\section{Experimental Evaluation}
The general-swarm study uses five $N=10$ topology groups: complete, path, and sparse, medium, and dense random-connected graphs. Each has $200$ independent runs of $5{,}000$ rounds, totaling five million rounds. We record connectivity, edge changes, normalized diameter, and stretching of a fixed initial spanning tree. Diameter sensitivity uses $100$ runs per group for $N=20$ and $N=50$.

We use $V=50$. Complete configurations lie in a radius-$10$ disk; paths use consecutive edges of length $0.9V$. Random-connected candidates use sampling-disk radii $90$, $55$, and $42$, are retained only if connected, and are classified by initial edge density as sparse $[0.05,0.35]$, medium $[0.35,0.65]$, or dense $[0.65,0.95]$. The base seed is $72{,}010{,}001$. For general swarms, this is an empirical topology study rather than a convergence claim.

The topology classes serve distinct purposes. Complete and path graphs are structural extremes: preserving every pairwise edge confines a complete graph to diameter $V$, whereas a path constrains only local consecutive separations. The three random-connected strata test whether the resulting geometry changes gradually with initial edge density rather than only between two hand-selected cases. The $N=20$ and $N=50$ experiments test whether the qualitative ordering persists beyond the main $N=10$ setting.

For every run, the identities of the edges in a fixed spanning tree of the initial visibility graph are retained and their lengths are tracked throughout the execution. Aggregate statistics include medians, interquartile ranges, and $5$th--$95$th percentile ranges. Time-series measurements are sampled every ten rounds. The same base seed and fixed group-specific offsets are used throughout.

No edge or connectivity loss was observed, reproducing the deterministic guarantee at implementation level over five million rounds. This is an implementation check rather than independent evidence for the theorem. Beyond safety, the experiments quantify the geometric cost of preserving the initial constraints. Complete graphs remain confined to diameter $V$, whereas paths expand over many visibility ranges. For $N=10$, median final diameters of dense, medium, and sparse random-connected groups are approximately $1.65V$, $2.2V$, and $3.6V$. Diameter measures global expansion, while stretching of a fixed initial spanning tree measures deformation of an initially connectivity-sufficient structure.

\begin{figure}[t]\centering
\includegraphics[width=0.98\columnwidth]{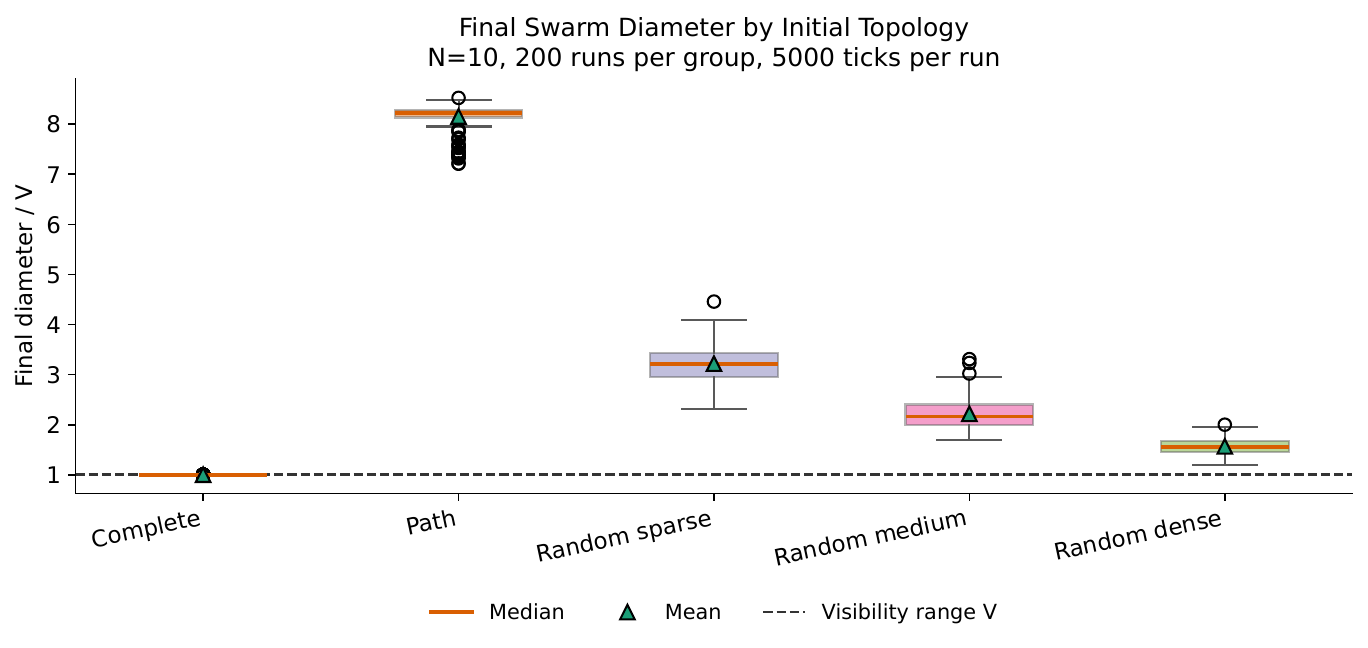}
\caption{Final normalized diameter for five $N=10$ topology groups, $200$ runs per group.}\label{fig:diameter}\end{figure}

\begin{figure}[t]\centering
\includegraphics[width=0.99\columnwidth]{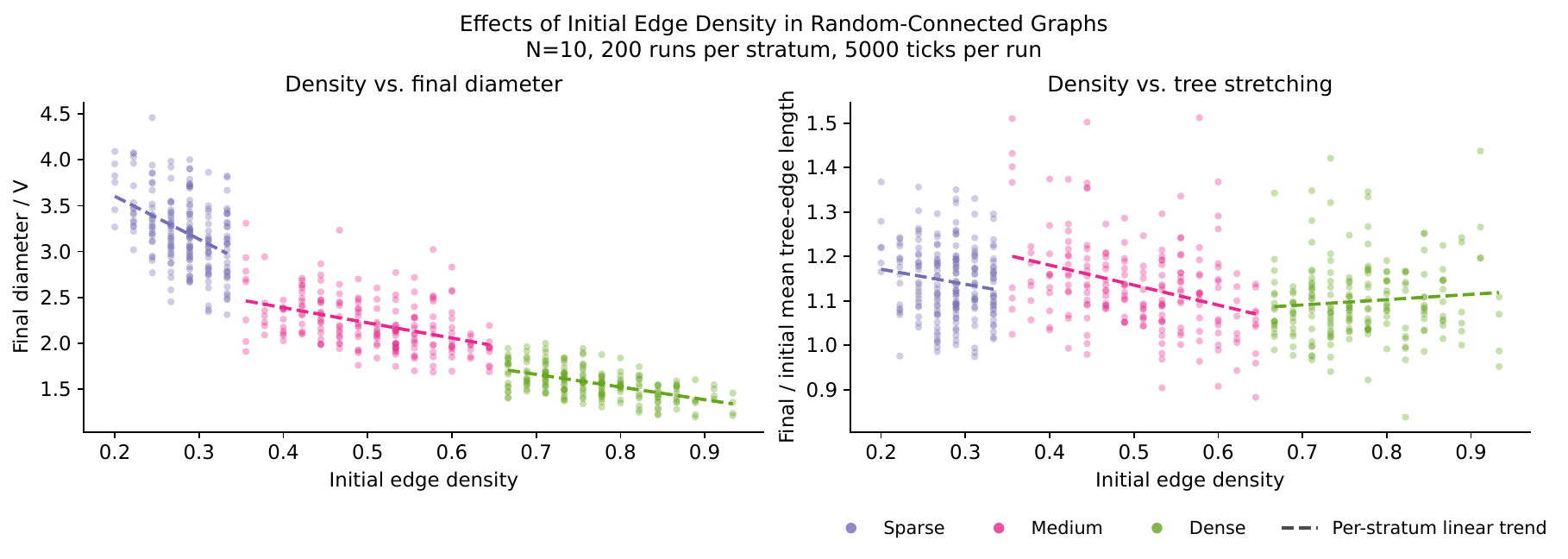}
\caption{Initial edge density versus final diameter and fixed-tree stretching for random-connected $N=10$ swarms.}
\label{fig:density}\end{figure}

Within the random-connected family, greater initial density is associated with smaller final diameter under the tested initialization protocol. Fixed-tree stretching is more variable, showing that global spatial extent and local edge stretching are related but distinct observables.

Figure~\ref{fig:density} compares two outcomes within the random-connected family. Final diameter measures global spatial expansion, while the final-to-initial mean length ratio of the fixed tree measures stretching of a connectivity-sufficient initial structure. Their visibly different variability shows that global expansion and local edge stretching are related but not interchangeable summaries.

Most observed expansion occurs during the first $500$ rounds, followed by substantially slower variation through round $5{,}000$. The percentile bands computed in the full evaluation show that the topology-dependent differences persist across independent runs rather than being driven by isolated executions. Fixed-tree edges stretch substantially in every topology group, with the strongest normalized growth for path and sparse random-connected initial graphs. These trends support the interpretation that global diameter and deformation of the preserved connectivity structure capture distinct aspects of the dynamics.

Across all tested sizes, median final diameter follows
\begin{equation}\text{Complete}<\text{Dense}<\text{Medium}<\text{Sparse}<\text{Path}.\end{equation}
For paths, the median rises from approximately $17.22V$ at $N=20$ to $44.23V$ at $N=50$.

\begin{figure}[t]\centering
\includegraphics[width=0.98\columnwidth]{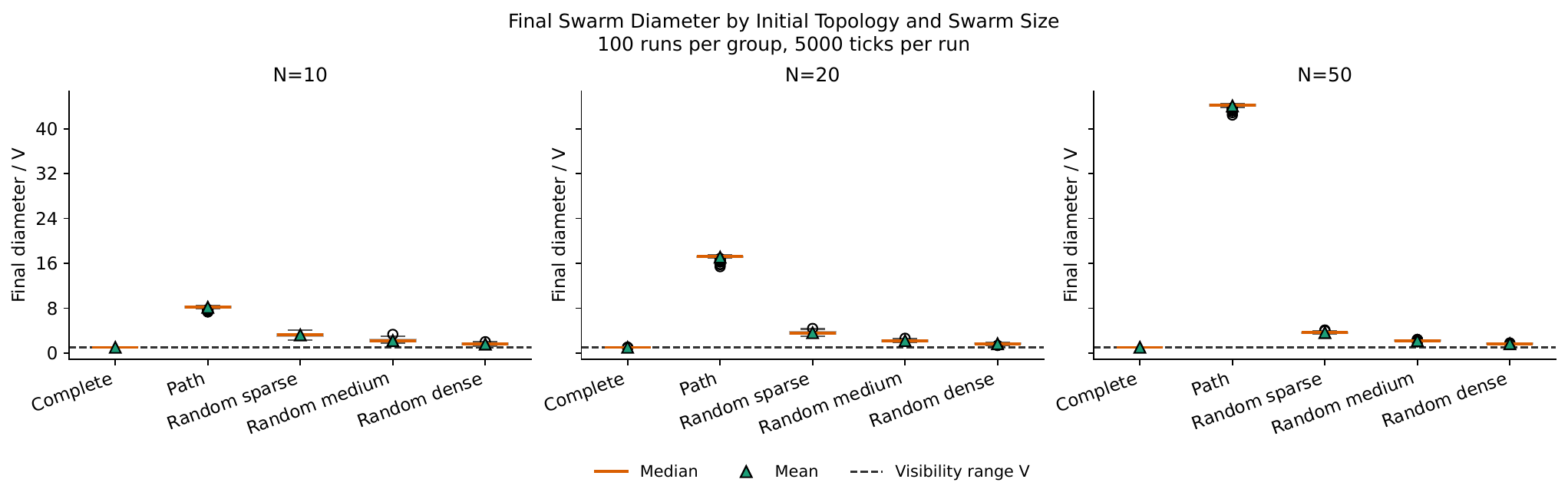}
\caption{Final diameter for $N=20$ and $N=50$, $100$ runs per group.}\label{fig:size}\end{figure}

For two agents we simulate one million runs with $D_0=0$, $V=1$, and $100$ rounds. The empirical first-round moments are $0.636830V$ and $0.500073V^2$, matching \eqref{eq:moments} and directly validating the simulation. For $D_t\ge0.97V$, first-hitting time has mean $9.5143$, median $7$, standard deviation $8.6483$, and quantiles $(1,3,13,26)$ at probabilities $(0.05,0.25,0.75,0.95)$. Only $1.2\times10^{-5}$ of runs have not entered by round $100$. The mean distance first exceeds $0.97V$ at round $21$; at that round $82.6507\%$ of runs are currently in the region and $90.8140\%$ have entered at least once. The quantities differ because $D_t$ is not monotone along individual trajectories.

\begin{figure*}[t]
\centering
\begin{minipage}[t]{0.485\textwidth}
\centering
\includegraphics[width=\linewidth]{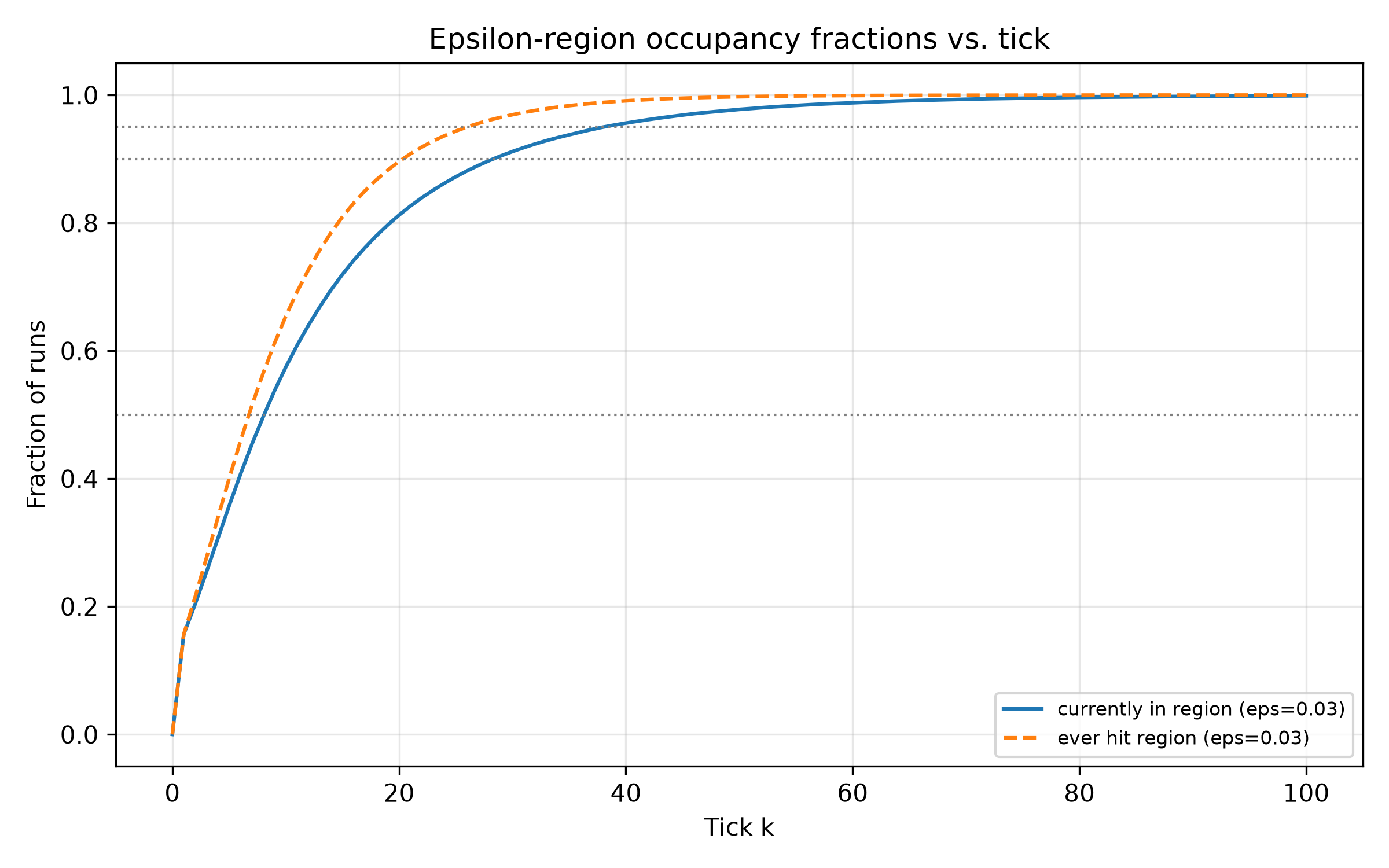}
\vspace{-1mm}

{\footnotesize (a) Current occupancy versus cumulative entry.}
\end{minipage}
\hfill
\begin{minipage}[t]{0.485\textwidth}
\centering
\includegraphics[width=\linewidth]{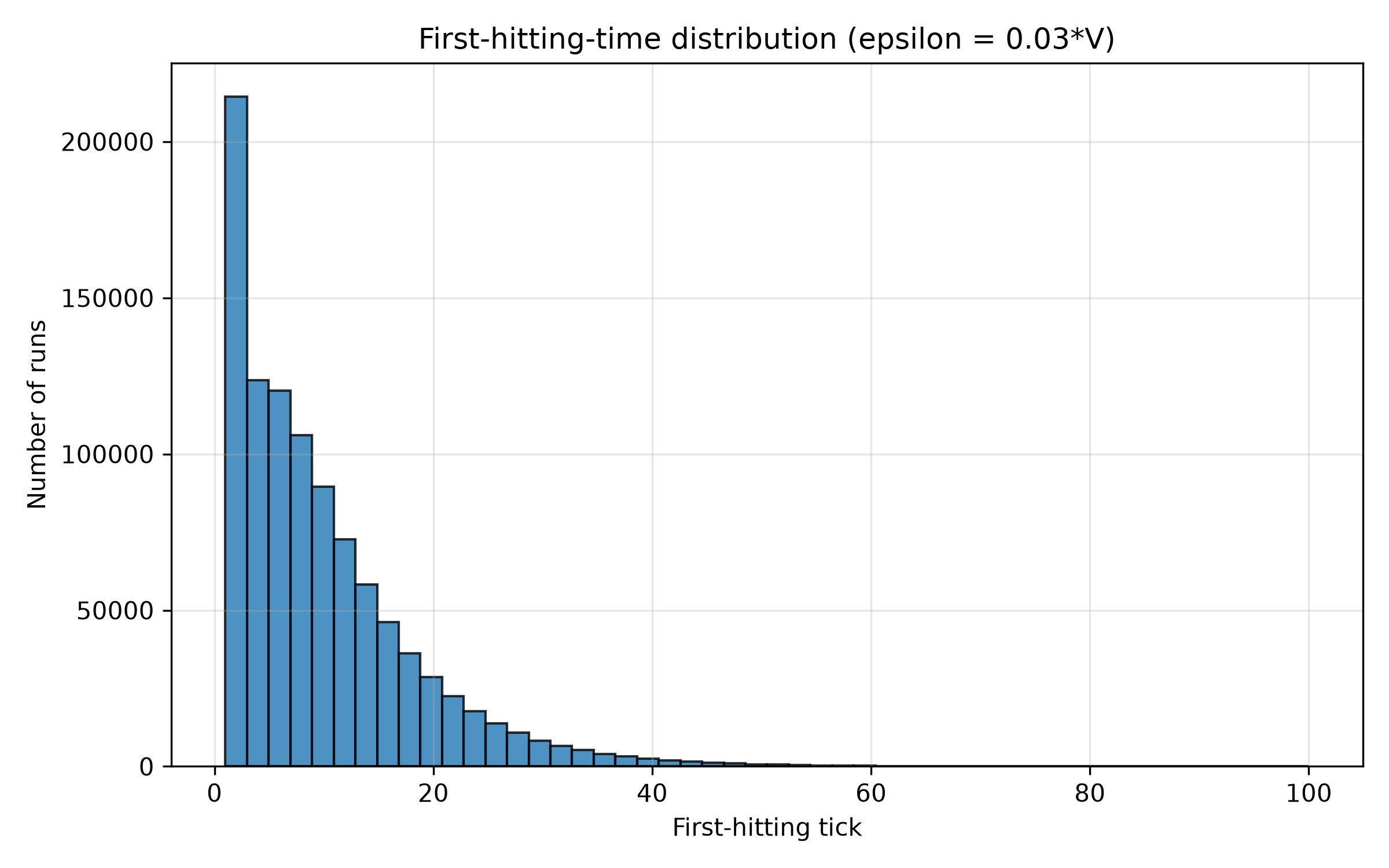}
\vspace{-1mm}

{\footnotesize (b) First-hitting-time distribution.}
\end{minipage}
\caption{Finite-round behavior in one million two-agent runs for $D_t\ge0.97V$.}
\label{fig:twoagent}
\end{figure*}

The empirical first-round moments closely reproduce the exact values and therefore provide a direct implementation check. The hitting-time distribution is concentrated in early rounds but has a right tail. Current occupancy and cumulative first entry are distinct because an individual trajectory can leave the $0.97V$ region after first entering it. The cumulative entry fraction is monotone by definition, whereas current occupancy need not be monotone.

The theoretical bound is $2/(0.03)^2\approx2222.22$. To retain the exact state-dependent transition law, normalize $V=1$ and write $a(z)=(1-z)/2$. By rotational symmetry, represent the current relative position as $(z,0)^\top$. For independent $\theta_1,\theta_2\sim\operatorname{Unif}[0,2\pi)$, the next normalized distance is
\begin{equation}
\begin{aligned}
F(z,\theta_1,\theta_2)
&=\Bigl[
\bigl(z+a(z)(\cos\theta_2-\cos\theta_1)\bigr)^2
\\
&\qquad
+a(z)^2\bigl(\sin\theta_2-\sin\theta_1\bigr)^2
\Bigr]^{1/2}.
\end{aligned}
\label{eq:normalized-transition}
\end{equation}
For $Z_t=D_t/V$ and a normalized target $b=0.97$, define \[ \tau_b^{Z}=\inf\{t\ge0:Z_t\ge b\}, \qquad h_b(z)=\mathbb E[\tau_b^{Z}\mid Z_0=z], \] with $h_b(z)=0$ for $z\ge b$. Here, $\tau_{0.97}^{Z}=\tau_{0.03V}$ under the earlier boundary-margin notation.
Conditioning on the first transition gives, for $z<b$,
\begin{equation}
h_b(z)=1+\frac{1}{(2\pi)^2}
\int_0^{2\pi}\!\int_0^{2\pi}
 h_b(F(z,\theta_1,\theta_2))\,d\theta_1d\theta_2,
\end{equation}
Equation~(26) has a direct first-step interpretation. Starting from a normalized distance $z<b$, the agents first execute one round, which contributes the term $1$. Their independently sampled directions $\theta_1$ and $\theta_2$ determine the next normalized distance $F(z,\theta_1,\theta_2)$ through~\eqref{eq:normalized-transition}. From that new state, the expected number of additional rounds required to reach the target is $h_b(F(z,\theta_1,\theta_2))$. Since both directions are uniformly distributed, the double integral averages this remaining time over all possible direction pairs, with the factor $1/(2\pi)^2$ providing the required normalization. If $F(z,\theta_1,\theta_2)\ge b$, the target is reached during the current round, so the remaining-time contribution for that direction pair is zero. Thus, the equation states that the expected hitting time equals one current round plus the expected remaining hitting time from the random next state.

We discretize $[0,b)$ into $M$ cells and use $Q$ periodic quadrature points for each angular variable. The resulting substochastic transition matrix $P^{(M,Q)}$ yields
\begin{equation}
(I-P^{(M,Q)})\mathbf h=\mathbf 1.
\end{equation}
Solving this system gives approximately $9.50$ rounds, independently corroborating the Monte Carlo estimate for the same first-hitting-time quantity.

\begin{table}[t]\centering
\caption{Bellman approximation of $h_{0.97}(0)$.}\label{tab:bellman}
\begin{tabular}{rrc}\toprule $M$&$Q$&$h_{0.97}^{(M,Q)}(0)$\\\midrule
400&256&9.4896\\800&512&9.4943\\1200&1024&9.4964\\\bottomrule\end{tabular}\end{table}

The Bellman values stabilize near $9.50$ under successive state and angular refinements. The calculation retains the full state-dependent one-step transition law, whereas the analytical hitting-time bound replaces the drift by its minimum value before target entry. Their large numerical difference is therefore expected: the theorem proves finite expected hitting time but is not intended as a tight finite-round prediction. The agreement between Bellman and Monte Carlo concerns the same first-hitting-time quantity and provides two independent numerical estimates. For the finest discretization, the maximum absolute residual of the computed linear system is below $5\times10^{-14}$. This verifies the numerical solve but does not bound discretization error. A separate one-million-run implementation yields mean $9.5033$ with estimated standard error $0.0086$ and reproduces the reported median and quantiles.

\begin{figure}[t]\centering
\includegraphics[width=0.98\columnwidth]{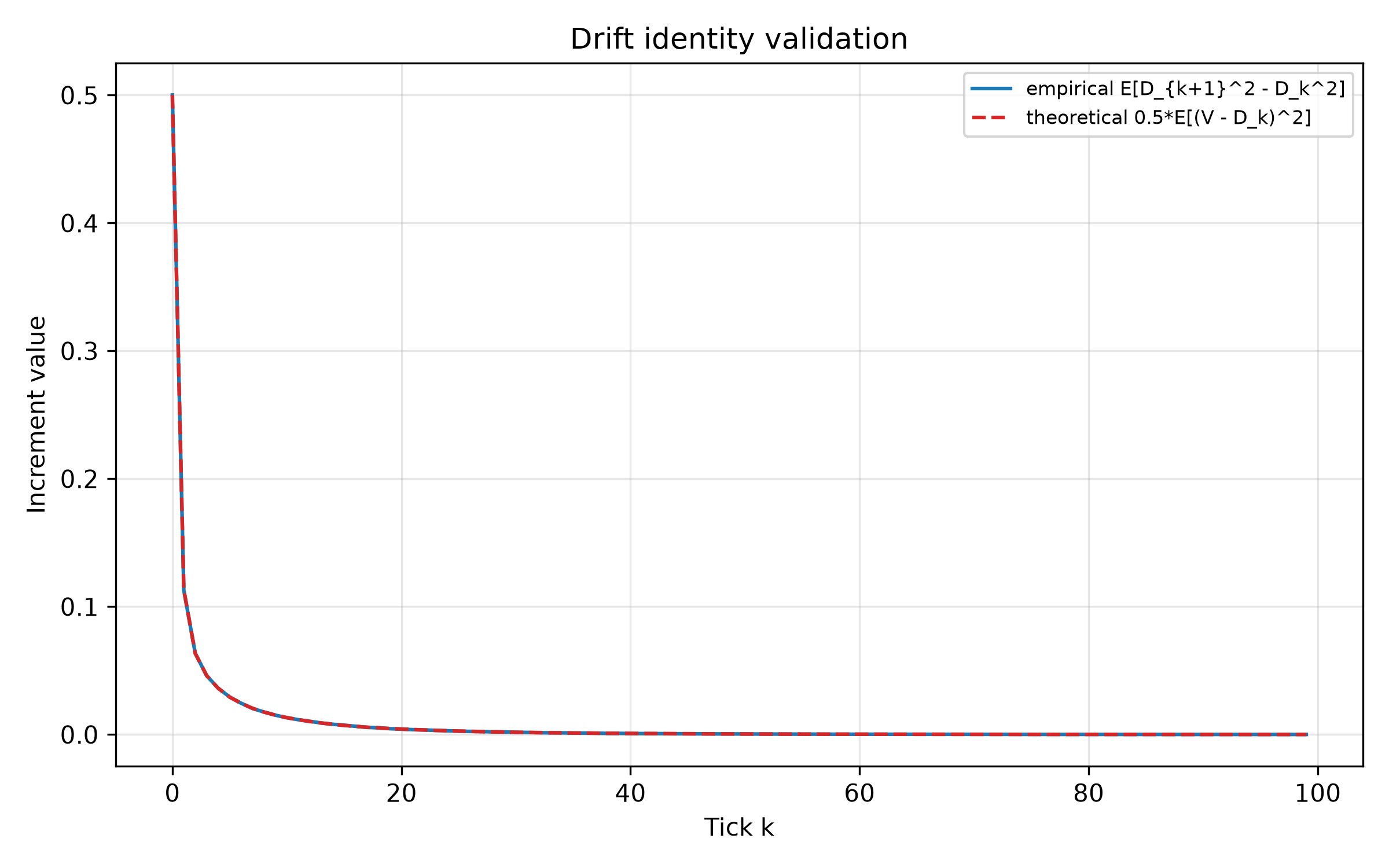}
\caption{Monte Carlo validation of Lemma~\ref{lem:drift}.}\label{fig:drift}\end{figure}

\begin{table}[t]
\centering
\caption{Theory and numerical results for two initially coincident agents.}
\label{tab:two-agent-comparison}
\begin{tabular}{lcc}
\toprule
Quantity & Analysis & Monte Carlo \\
\midrule
$\mathbb{E}[D_1]/V$ & $2/\pi=0.636620$ & $0.636830$ \\
$\mathbb{E}[D_1^2]/V^2$ & $1/2=0.500000$ & $0.500073$ \\
$\mathbb{E}[\tau_{0.03V}]$ & Bellman: $\approx9.50$ & $9.5143$ \\
$\operatorname{median}(\tau_{0.03V})$ & -- & $7$ \\
\bottomrule
\end{tabular}
\end{table}

The exact first-round moments show that maximal safe random jumps produce substantial separation immediately, even when the agents start without a defined separation direction. The near-boundary results provide a complementary finite-round characterization. Although Theorem~\ref{thm:near-boundary-hitting} gives the conservative bound
\begin{equation}
\mathbb{E}[\tau_{0.03V}]\le\frac{2}{(0.03)^2}\approx2222.22,
\end{equation}
the Bellman computation and the one-million-run Monte Carlo experiment independently estimate the same mean hitting time at approximately $9.5$ rounds. The large gap reflects the conservatism of replacing the state-dependent drift by its minimum pre-hitting value, rather than slow behavior of the actual process.

\section{Discussion and Limitations}
The policy separates safety from dispersion. The radius is a deterministic certificate; random directions explore the safe set. For general swarms, pairs are coupled through farthest-neighbor ranges. Our general guarantees are edge preservation, connectivity, boundary locking, and the boundary-edge lower bound conditional on absorption. The experiments show topology-dependent stretching but not general convergence.

Preserving every edge is deliberately conservative. Complete graphs retain every pairwise constraint and therefore cannot exceed diameter $V$, while sparse connected graphs can expand over many visibility ranges. Safely releasing redundant constraints from anonymous current ranges remains open.

We assume exact ranges and synchronous motion. Noise, actuation errors, asynchrony, collisions, and dynamics are outside scope. Existing range-only and communication-aware systems illustrate complementary mechanisms using estimation, barrier functions, online learning, and optimization \cite{chen2025rangeonly,yang2024online}. Robustification requires a noise-aware safety margin and delayed or asynchronous analysis.

\subsection{Interpretation of the General-Swarm Results}
For larger swarms, the radius of each endpoint of an edge depends on its own farthest visible neighbor, not necessarily on the edge under examination. Pairwise progress is therefore coupled through the rest of the graph, and the scalar two-agent drift does not extend directly. The experiments indicate that the initial topology strongly constrains the attainable geometry: preserving many initial edges imposes stronger geometric confinement, while sparse connected structures permit expansion over several visibility ranges.

The observed ordering should be interpreted under the stated initialization protocol. It is not a theorem that edge density alone determines the final diameter, nor does the experiment establish convergence to an absorbing state. A theoretical characterization of limiting configurations for coupled swarms remains open.

\subsection{Limitations of Preserving Every Edge}
The never-lose-a-neighbor requirement is deliberately conservative. Without bearings, identifiers, communication, or graph topology, an agent cannot certify that a particular visible edge is redundant. Consequently, redundant initial edges remain active constraints throughout execution. A complete initial graph preserves every pairwise distance below $V$ and therefore cannot exceed diameter $V$, while a sparse connected graph can expand over several visibility ranges. Designing a range-only mechanism that safely releases redundant edges remains an open problem.

\subsection{Scope and Future Robustification}
The exact range and synchronous-jump assumptions isolate the information question addressed here: what finite motion can be certified without bearings? They do not model all limitations of a physical platform. A robust implementation would need a safety margin derived from a stated range-error model, treatment of delayed or asynchronous observations, and motion execution consistent with robot dynamics. Collision avoidance would also require an additional mechanism because the present objective preserves visibility edges rather than enforcing positive pairwise clearance.

\section{Conclusion}
We derived the maximal isotropic displacement certifiable from anonymous current ranges while preserving every visibility edge under synchronous finite motion. The resulting farthest-neighbor rule uses one scalar statistic, preserves connectivity for arbitrary swarm size, and creates permanent anchors when edges reach the visibility limit. Every absorbing configuration contains at least $n/2$ boundary edges.

For two agents, maximal safe random jumps yield positive conditional drift in squared distance, almost-sure convergence to the boundary, and finite expected time to reach every fixed boundary neighborhood. Exact first-round moments, one million Monte Carlo runs, drift validation, and an independent Bellman calculation provide mutually consistent analytical and numerical characterizations. For general swarms, aggregate experiments reproduce the deterministic safety guarantee and show a consistent topology-dependent ordering of attainable diameter across the tested swarm sizes. Establishing convergence for coupled swarms, characterizing limiting configurations, and developing robust noise-aware variants remain open problems.

\bibliographystyle{IEEEtran}
\bibliography{references}

@book{durrett2019probability,
  author    = {Rick Durrett},
  title     = {Probability: Theory and Examples},
  edition   = {5},
  series    = {Cambridge Series in Statistical and Probabilistic Mathematics},
  publisher = {Cambridge University Press},
  year      = {2019},
  isbn      = {9781108473682}
}

@article{ando1999distributed,
  author  = {Hideki Ando and Yoshinobu Oasa and Ichiro Suzuki and Masafumi Yamashita},
  title   = {Distributed Memoryless Point Convergence Algorithm for Mobile Robots with Limited Visibility},
  journal = {IEEE Transactions on Robotics and Automation},
  volume  = {15},
  number  = {5},
  pages   = {818--828},
  year    = {1999},
  doi     = {10.1109/70.795787}
}

@article{ji2007distributed,
  author  = {Meng Ji and Magnus Egerstedt},
  title   = {Distributed Coordination Control of Multiagent Systems While Preserving Connectedness},
  journal = {IEEE Transactions on Robotics},
  volume  = {23},
  number  = {4},
  pages   = {693--703},
  year    = {2007},
  doi     = {10.1109/TRO.2007.900638}
}

@inproceedings{gordon2004gathering,
  author    = {Noam Gordon and Israel A. Wagner and Alfred M. Bruckstein},
  title     = {Gathering Multiple Robotic A(ge)nts with Limited Sensing Capabilities},
  booktitle = {Ant Colony Optimization and Swarm Intelligence},
  series    = {Lecture Notes in Computer Science},
  volume    = {3172},
  pages     = {142--153},
  publisher = {Springer},
  year      = {2004},
  doi       = {10.1007/978-3-540-28646-2_13}
}

@inproceedings{gordon2008crude,
  author    = {Noam Gordon and Yotam Elor and Alfred M. Bruckstein},
  title     = {Gathering Multiple Robotic Agents with Crude Distance Sensing Capabilities},
  booktitle = {Ant Colony Optimization and Swarm Intelligence},
  series    = {Lecture Notes in Computer Science},
  volume    = {5217},
  pages     = {72--83},
  publisher = {Springer},
  year      = {2008},
  doi       = {10.1007/978-3-540-87527-7_7}
}

@article{bellaiche2017continuous,
  author  = {Levi-Itzhak Bellaiche and Alfred M. Bruckstein},
  title   = {Continuous Time Gathering of Agents with Limited Visibility and Bearing-Only Sensing},
  journal = {Swarm Intelligence},
  volume  = {11},
  number  = {3--4},
  pages   = {271--293},
  year    = {2017},
  doi     = {10.1007/s11721-017-0140-y}
}

@misc{manor2019local,
  author       = {Manor, Rotem and Barel, Ariel and Bruckstein, Alfred M.},
  title        = {Local Interactions for Cohesive Flexible Swarms},
  year         = {2019},
  howpublished = {arXiv preprint arXiv:1903.09259},
  note         = {Available: https://arxiv.org/abs/1903.09259},
  doi          = {10.48550/arXiv.1903.09259}
}

@article{zavlanos2007potential,
  author={Zavlanos, Michael M. and Pappas, George J.},
  title={Potential Fields for Maintaining Connectivity of Mobile Networks},
  journal={IEEE Transactions on Robotics}, volume={23}, number={4}, pages={812--816}, year={2007},
  doi={10.1109/TRO.2007.900642}
}

@inproceedings{dimarogonas2008decentralized,
  author={Dimarogonas, Dimos V. and Johansson, Karl H.},
  title={Decentralized Connectivity Maintenance in Mobile Networks with Bounded Inputs},
  booktitle={Proceedings of the 2008 IEEE International Conference on Robotics and Automation},
  pages={1507--1512}, year={2008}, doi={10.1109/ROBOT.2008.4543415}
}

@article{flocchini2005gathering,
  author={Flocchini, Paola and Prencipe, Giuseppe and Santoro, Nicola and Widmayer, Peter},
  title={Gathering of Asynchronous Robots with Limited Visibility},
  journal={Theoretical Computer Science}, volume={337}, pages={147--168}, year={2005},
  doi={10.1016/j.tcs.2005.01.001}
}

@article{lee2023practical,
  author={Lee, Chan-Seok and Suh, Ui-Suk and Lee, Kang-Min and Whang, Ick-Ho and Ra, Won-Sang},
  title={Practical Distance-Based Multi-robot Formation Control Using Low-Cost Ultrasonic Source Tracker},
  journal={Journal of Electrical Engineering \& Technology}, volume={18}, number={4}, pages={3219--3236}, year={2023},
  doi={10.1007/s42835-023-01431-0}
}

@inproceedings{li2022decentralized,
  author={Li, Minghao and Jie, Yingrui and Kong, Yang and Cheng, Hui},
  title={Decentralized Global Connectivity Maintenance for Multi-Robot Navigation: A Reinforcement Learning Approach},
  booktitle={Proceedings of the 2022 International Conference on Robotics and Automation},
  pages={8801--8807}, year={2022}, doi={10.1109/ICRA46639.2022.9812163}
}

@inproceedings{yang2024online,
  author={Yang, Yupeng and Lyu, Yiwei and Zhang, Yanze and Gao, Ian and Luo, Wenhao},
  title={Integrating Online Learning and Connectivity Maintenance for Communication-Aware Multi-Robot Coordination},
  booktitle={Proceedings of the 2024 IEEE/RSJ International Conference on Intelligent Robots and Systems},
  pages={5770--5776}, year={2024}, doi={10.1109/IROS58592.2024.10802189}
}

@article{chen2025rangeonly,
  author  = {Chen, Jin and Li, Ming and Marcantoni, Matteo and
             Jayawardhana, Bayu and Wang, Yafei},
  title   = {Range-Only Distributed Safety-Critical Formation Control
             Based on Contracting Bearing Estimators and
             Control Barrier Functions},
  journal = {IEEE Internet of Things Journal},
  volume  = {12},
  number  = {19},
  pages   = {40968--40979},
  year    = {2025},
  doi     = {10.1109/JIOT.2025.3590774}
}

@inproceedings{brandstatter2024flock,
  author={Brandst\"atter, Andreas and Smolka, Scott A. and Stoller, Scott D. and Tiwari, Ashish and Grosu, Radu},
  title={Flock-Formation Control of Multi-Agent Systems Using Imperfect Relative Distance Measurements},
  booktitle={Proceedings of the 2024 IEEE International Conference on Robotics and Automation},
  pages={12193--12200}, year={2024}, doi={10.1109/ICRA57147.2024.10610147}
}

@misc{barel2019cometogether,
  author       = {Barel, Ariel and Manor, Rotem and Bruckstein, Alfred M.},
  title        = {{COME TOGETHER}: Multi-Agent Geometric Consensus
                  (Gathering, Rendezvous, Clustering, Aggregation)},
  year         = {2019},
  howpublished = {arXiv preprint arXiv:1902.01455},
  note         = {Available: https://arxiv.org/abs/1902.01455}
}
\end{document}